\documentclass[conference]{IEEEtran}
\IEEEoverridecommandlockouts
\usepackage{cite}
\usepackage{amsmath,amssymb,amsfonts}
\usepackage{algorithmic}
\usepackage{graphicx}
\usepackage{physics}
\usepackage{tikz}
\usetikzlibrary{arrows.meta,positioning,fit,calc}
\usepackage{textcomp}
\usepackage{bm}
\usepackage{xcolor}
\usepackage{stfloats}
\usepackage[section]{placeins}
\usepackage{amsthm}
 \usepackage{comment}
\theoremstyle{plain}
\newtheorem{theorem}{Theorem}

\theoremstyle{remark}
\newtheorem{remark}{Remark}

\def\BibTeX{{\rm B\kern-.05em{\sc i\kern-.025em b}\kern-.08em
    T\kern-.1667em\lower.7ex\hbox{E}\kern-.125emX}}
 
\newcommand{\bsf}[1]{\bm{\mathsf{#1}}}
 
 \newcommand{\RGL}{R_{\textnormal{G,L}}}
  \newcommand{\RGU}{R_{\textnormal{G,U}}}
\definecolor{softorange}{RGB}{255,232,204}
\definecolor{softgreen}{RGB}{213,255,218}
\definecolor{softblue}{RGB}{220,232,255}
\definecolor{softgray}{RGB}{238,238,238}
\definecolor{darkorange}{RGB}{190,105,35}
\definecolor{darkgreen}{RGB}{55,130,70}
\definecolor{darkblue}{RGB}{65,95,155}

\begin{document}

\title{Fundamental Limits of Quantized MIMO ISAC under Gaussian Signaling}

\author{
\IEEEauthorblockN{Hossein Atrsaei\textsuperscript{*}, Mireille Sarkiss\textsuperscript{\textdagger}, Mich\`ele Wigger\textsuperscript{\S}}

\IEEEauthorblockA{\textsuperscript{*}LTCI, T\'el\'ecom Paris, Institut Polytechnique de Paris, Palaiseau, France\\
Email: hossein.atrsaei@ip-paris.fr}

\IEEEauthorblockA{\textsuperscript{\textdagger}SAMOVAR, T\'el\'ecom SudParis, Institut Polytechnique de Paris, Palaiseau, France\\
Email: mireille.sarkiss@telecom-sudparis.eu}

\IEEEauthorblockA{\textsuperscript{\S}Universit\'e Paris-Saclay, CNRS, CentraleSup\'elec, Laboratoire des Signaux et Syst\`emes, 91190 Gif-sur-Yvette, France\\
Email: michele.wigger@centralesupelec.fr}
}
\sloppy
\maketitle

\begin{abstract}

We study a quantized multiple-input multiple-output (MIMO) integrated sensing and communication (ISAC) system in which the communication and sensing receivers each apply analog spatial combining followed by scalar subtractive dithered quantization. This quantization model leads to an additive effective-noise representation with non-Gaussian noise. We derive upper and lower bounds on the capacity of this channel. Numerical results show that these bounds are tight at low signal-to-noise ratios (SNR) and saturate at high SNR due to finite-resolution quantization. They also show that, despite the effective noise being  non-Gaussian, independent and identically distributed (i.i.d.) isotropic Gaussian signaling achieves rates close to capacity. Focusing on i.i.d.\ Gaussian signaling, this paper also presents a closed-form expression for the linear minimum mean-squared error (LMMSE) achieved under a Kronecker sensing-channel model. Numerical results show that the LMMSE also saturates at high SNR, where the saturation level increases as the spatial combining ratio decreases, and for combining ratios below one, saturation  occurs even without quantization.

\end{abstract}

\begin{IEEEkeywords}
Integrated sensing and communication (ISAC), Multiple input multiple output (MIMO), dithered quantizer, analog-to-digital converter (ADC), Gaussian signaling.
\end{IEEEkeywords}

\section{Introduction}
\label{sec:intro}

This work studies information-theoretic limits of integrated sensing and communication (ISAC) systems and the tradeoffs between communication and sensing performance in these systems. Previous works have characterized fundamental performance limits of ISAC systems under different channel models and sensing metrics~\cite{zeng2023tradeoff, deng2022limits, ahmadipour,bloch,joudeh}. 

A key practical limitation in large-scale multiple-input multiple-output (MIMO) systems is the power consumption and hardware cost of analog-to-digital converters (ADCs). Low-resolution ADCs, often combined with analog spatial processing prior to quantization, offer an effective way to reduce receiver complexity. The impact of coarse quantization on MIMO communication has been widely studied, including one-bit and low-resolution receiver architectures~\cite{khalili2018mimo,liang2016onebit,choi2016general, singh2009limits,mo2015capacity}. However, the capacity of quantized MIMO channels generally does not admit a closed-form expression, and existing works typically provide bounds or approximations for specific settings, such as one-bit ADCs or asymptotic regimes.

In this paper, we  derive upper and lower bounds on the capacity of a MIMO ISAC system with spatial combiners and quantizers at the receiver side. The focus is on dithered quantization, which admits an additive-noise representation~\cite{atrsaei2026capacity}, but with a non-Gaussian noise. We assume channel state information at the communication receiver (CSIR) and no channel state information at the transmitter (no-CSIT). Our capacity lower bound follows from the worst-case noise property of Gaussian noise, while the upper bound follows from the maximum-entropy property of Gaussian random vectors. Numerical results show that our bounds are tight at low signal-to-noise ratios (SNR) and saturate at high SNR because of the limitations of the quantizers. 
Moreover, for all SNRs independent and identically distributed (i.i.d.) isotropic Gaussian signaling achieves rates close to capacity, despite the fact that  quantization induces non-Gaussian noise. 

We indeed consider a bi-static ISAC system in which the sensing receiver also processes its received signal through linear combiners and quantizers, before estimating the target response. 
For simplicity, a Kronecker model is used for the sensing channel~\cite{weichselberger2006stochastic}.
In the spirit of \emph{communication-centric} ISAC systems, we present a closed-form expression for the LMMSE at the sensing receiver under i.i.d. Gaussian signaling. 
Numerical results for Jakes channel model illustrate that the sensing LMMSE saturates at high SNR due to the quantization noise. They also show that, when spatial dimension reduction is applied (i.e, when the linear combiners map to smaller dimensions), saturation can occur even without quantization.


\textit{Notations:}
Random scalars, vectors, matrices, and sets are denoted by
$\mathsf{a}$, $\bsf{a}$, $\bsf{A}$, and $\mathsf{A}$, respectively, while deterministic quantities are denoted by $a$, $\mathbf{a}$, $\mathbf{A}$, and $\mathcal{A}$. We denote the covariance matrix of a random vector $\bsf{a}$ by $\mathbf{R}_{\bsf{a}}$,
and the Kronecker product and column-wise vectorization operators by $\otimes$ and $\mathrm{vec}(\cdot)$, respectively. 
 

\section{Channel Model and Performance Metrics}
\label{sec:system_model}

\subsection{Channel Model}
\label{subsec:channel_model}

We consider the quantized MIMO ISAC architecture shown in Fig.~\ref{fig:system_model}. A transmitter equipped with $N_t$ antennas sends a dual-functional random waveform
$\bsf{X} = [\bsf{x}_1,\ldots,\bsf{x}_T] \in \mathbb{C}^{N_t \times T}$
over a coherence block of $T \geq N_t$ channel uses, serving simultaneously for communication and sensing. The transmit covariance matrix is $\mathbf{R}_{\bsf{X}} = \mathbb{E}\!\left[\frac{1}{T}\bsf{X}\bsf{X}^H\right]$ with the average power constraint $\mathrm{Tr}(\mathbf{R}_{\bsf{X}}) = N_t P_0$.

The unquantized received signals at the communication and sensing receivers are
\begin{subequations}
\label{eq:unquantized_model}
\begin{IEEEeqnarray}{rCl}
    \bsf{Y}_c &=& \bsf{H}\bsf{X} + \bsf{W}_c,
    \label{eq:unquantized_comm}\\
    \bsf{Y}_s &=& \bsf{G}\bsf{X} + \bsf{W}_s.
    \label{eq:unquantized_sensing}
\end{IEEEeqnarray}
\end{subequations}
Here, $\bsf{H} \in \mathbb{C}^{N_c \times N_t}$ denotes the communication channel, and $\bsf{G} \in \mathbb{C}^{N_s \times N_t}$ denotes the sensing (target-response) channel. The noise matrices have independent circularly symmetric complex Gaussian (CSCG) entries across antennas and channel uses, with $\bsf{w}_c \triangleq \mathrm{vec}(\bsf{W}_c) \sim \mathcal{CN}(\mathbf{0}, \sigma_c^2 \mathbf{I}_{N_cT})$ and $\bsf{w}_s \triangleq \mathrm{vec}(\bsf{W}_s) \sim \mathcal{CN}(\mathbf{0}, \sigma_s^2 \mathbf{I}_{N_sT})$.
The communication channel has i.i.d.\ Rayleigh-fading entries, i.e.,
$\mathrm{vec}(\bsf{H}) \sim \mathcal{CN}(\mathbf{0},\mathbf{I}_{N_tN_c})$.

The sensing channel follows the Kronecker correlation model
\cite{weichselberger2006stochastic}
\begin{equation}
    \bsf{G}
    =
    \mathbf{R}_{A}^{1/2}
    \bsf{G}_{0}
    \bigl(\mathbf{R}_{B}^{1/2}\bigr)^T,
    \qquad
    \mathbf{R}_{\bsf{g}}
    =
    \mathbf{R}_{B}\otimes\mathbf{R}_{A},
    \label{eq:kronecker_sensing_model}
\end{equation}
where $\bsf{g} \triangleq \mathrm{vec}(\bsf{G})$, $\bsf{G}_{0}$ has i.i.d.\
$\mathcal{CN}(0,1)$ entries, $\mathbf{R}_{B}\in\mathbb{C}^{N_t\times N_t}$ is the transmit-side correlation matrix, and $\mathbf{R}_{A}\in\mathbb{C}^{N_s\times N_s}$ is the receive-side correlation matrix. The sensing receiver is assumed to know the transmitted waveform $\bsf{X}$, as in the MIMO ISAC model of \cite{liu2022isac}. The transmitter has no channel state information about $\bsf{H}$, so the input distribution $p_{\bsf{X}}$ cannot depend on the realization of $\bsf{H}$.

\begin{figure}[t]
\centering
\begin{tikzpicture}[
    >=Latex,
    font=\footnotesize,
    node distance=0.7cm,
    block/.style={draw, rectangle, minimum height=5mm, minimum width=12mm, align=center},
    sum/.style={draw, circle, inner sep=0pt, minimum size=4.5mm}
]

\node (x) {$x$};
\node[sum, right=of x]      (sumd) {$+$};
\node[block, right=of sumd] (adc)  {$q_{K}(\cdot)$};
\node[sum, right=of adc]    (subd) {$-$};
\node[right=of subd]        (z) {$Q_K(x)$};

\draw[->] (x)    -- (sumd);
\draw[->] (sumd) -- (adc);
\draw[->] (adc)  -- (subd);
\draw[->] (subd) -- (z);

\node[below=0.5cm of sumd] (d1) {$\mathsf{d}$};
\node[below=0.5cm of subd] (d2) {$\mathsf{d}$};
\draw[->] (d1) -- (sumd);
\draw[->] (d2) -- (subd);

\node[draw, dashed, inner sep=5pt,
      fit=(sumd)(adc)(subd)(d1)(d2)] (qbox) {};
\node[below=3pt of qbox, font=\scriptsize]
      {Subtractive Dithered Quantizer};

\end{tikzpicture}
\caption{Subtractive dithered quantizer.}
\label{fig:dithered_quantizer}
\end{figure}
\begin{figure*}[!t]
\centering
\begin{tikzpicture}[
    scale=0.90,
    every node/.style={transform shape},
    >=Stealth,
    semithick,
    font=\footnotesize,
    block/.style={draw,rounded corners=2pt,align=center,minimum height=7mm,minimum width=16mm},
    smallblock/.style={draw,rounded corners=2pt,align=center,minimum height=5mm,minimum width=11mm,font=\scriptsize},
    rxblock/.style={draw,rounded corners=2pt,align=center,minimum height=8mm,minimum width=21mm,fill=orange!20,draw=orange!70!black},
    chan/.style={draw,rounded corners=2pt,align=center,minimum height=8mm,minimum width=24mm,fill=green!20,draw=green!50!black},
    memblock/.style={draw,rounded corners=2pt,align=center,minimum height=26mm,minimum width=16mm,fill=blue!15,draw=blue!60!black},
    dspblock/.style={draw,rounded corners=2pt,align=center,minimum height=8mm,minimum width=23mm,fill=blue!15,draw=blue!60!black},
    dashedbox/.style={draw,dashed,rounded corners=4pt}
]


\node[block, fill=gray!15, minimum width=18mm, minimum height=18mm] (tx)
{ISAC\\Tx};

\coordinate (fork) at ([xshift=8mm]tx.east);

\node[chan] (chc) at ([xshift=28mm,yshift=8mm]fork)
{Communication\\channel $\bsf{H}$};

\node[rxblock, right=9mm of chc] (rxc)
{Communication\\receiver};

\node[chan] (chs) at ([xshift=28mm,yshift=-8mm]fork)
{Sensing\\channel $\bsf{G}$};

\node[rxblock, right=9mm of chs] (rxs)
{Sensing\\receiver};

\node[dspblock, right=8mm of rxc] (dspc)
{Digital\\processing};

\node[dspblock, right=8mm of rxs] (dsps)
{LMMSE\\estimator};

\node[right=4mm of dspc] (xhat)
{$\hat{\bsf{X}}$};

\node[right=4mm of dsps] (ghat)
{$\hat{\bsf{g}}$};

\draw[->] (tx.east)
--
node[above,pos=0.5] {$\bsf{X}$}
(fork);

\fill (fork) circle (1.1pt);

\draw[->] (fork) |- (chc.west);
\draw[->] (fork) |- (chs.west);

\draw[->] (chc)
--
node[above] {$\bsf{Y}_c$}
(rxc);

\draw[->] (chs)
--
node[below] {$\bsf{Y}_s$}
(rxs);

\draw[->] (rxc)
--
node[above] {$\bsf{z}_c$}
(dspc);

\draw[->] (rxs)
--
node[below] {$\bsf{z}_s$}
(dsps);

\draw[->] (dspc) -- (xhat);
\draw[->] (dsps) -- (ghat);

\node[dashedbox, fit=(rxc)(rxs), inner sep=4pt] (zoom) {};


\node[font=\footnotesize\itshape] at ([yshift=-5mm]zoom.south)
{Receiver structure};

\node[
    block,
    fill=orange!20,
    draw=orange!70!black,
    minimum width=19mm,
    minimum height=22mm
] (A) at ([xshift=-28mm,yshift=-23mm]zoom.south)
{Linear\\combiner\\[-1pt]{\scriptsize $\mathbf A_r$}};

\node[smallblock, right=17mm of A, yshift=11mm] (q1R)
{$Q_K(\cdot)$};

\node[smallblock, right=17mm of A, yshift=5mm] (q1I)
{$Q_K(\cdot)$};

\node[smallblock, right=17mm of A, yshift=-5mm] (qNR)
{$Q_K(\cdot)$};

\node[smallblock, right=17mm of A, yshift=-11mm] (qNI)
{$Q_K(\cdot)$};

\node[memblock, right=20mm of q1I, yshift=-4.8mm] (mem)
{Memory\\over $T$};

\coordinate (bus) at ([xshift=6mm]A.east);

\draw[->] ([xshift=-17mm]A.west)
--
node[above,pos=0.30] {$\bsf{y}_{r,t}$}
(A.west);

\draw (A.east) -- (bus);
\fill (bus) circle (1.1pt);

\draw (bus |- q1R.west) -- (bus |- qNI.west);

\draw[->] (bus |- q1R.west)
--
node[above,font=\scriptsize] {$\Re\{u_1\}$}
(q1R.west);

\draw[->] (bus |- q1I.west)
--
node[above,font=\scriptsize] {$\Im\{u_1\}$}
(q1I.west);

\draw[->] (q1R.east)
--
node[above,font=\scriptsize] {$\Re\{z_1\}$}
(q1R.east -| mem.west);

\draw[->] (q1I.east)
--
node[above,font=\scriptsize] {$\Im\{z_1\}$}
(q1I.east -| mem.west);

\node[font=\scriptsize] at ($(q1I)!0.365!(qNR)$) {$\vdots$};

\draw[->] (bus |- qNR.west)
--
node[above,font=\scriptsize] {$\Re\{u_{\tilde N_r}\}$}
(qNR.west);

\draw[->] (bus |- qNI.west)
--
node[above,font=\scriptsize] {$\Im\{u_{\tilde N_r}\}$}
(qNI.west);

\draw[->] (qNR.east)
--
node[above,font=\scriptsize] {$\Re\{z_{\tilde N_r}\}$}
(qNR.east -| mem.west);

\draw[->] (qNI.east)
--
node[above,font=\scriptsize] {$\Im\{z_{\tilde N_r}\}$}
(qNI.east -| mem.west);

\draw[->] (mem.east)
--
node[above] {$\bsf{z}_r$}
++(9mm,0);

\node[dashedbox, fit=(A)(q1R)(q1I)(qNR)(qNI)(mem), inner sep=5pt] (detail) {};

\draw[dashed] (zoom.south west) -- (detail.north west);
\draw[dashed] (zoom.south east) -- (detail.north east);

\end{tikzpicture}

\caption{Quantized MIMO-ISAC system model. The transmitted random waveform
$\bsf{X}\in\mathbb{C}^{N_t\times T}$ is observed as
$\bsf{Y}_c=\bsf{H}\bsf{X}+\bsf{W}_c$ at the communication receiver and as
$\bsf{Y}_s=\bsf{G}\bsf{X}+\bsf{W}_s$ at the sensing receiver. Both receivers
employ the same quantization architecture
of~\cite{ruan2024taskbased,shlezinger2019asymptotic,shlezinger2019hardware}:
for each snapshot, the received vector $\bsf{y}_{r,t}$ is linearly combined
by $\mathbf A_r$, each real and imaginary component of the resulting complex
entries is quantized by a scalar subtractive dithered quantizer $Q_K(\cdot)$,
and the quantized outputs are accumulated in memory over the coherence block.}
\label{fig:system_model}
\end{figure*}

Each receiver adopts the hardware-limited quantization architecture of
\cite{shlezinger2019asymptotic,shlezinger2019hardware,ruan2024taskbased}, consisting of an analog spatial combiner followed by parallel scalar ADCs. For each snapshot $t\in\{1,\ldots,T\}$ and each receiver $r\in\{c,s\}$, the received vector $\bsf{y}_{r,t}\in\mathbb{C}^{N_r}$ is processed by a deterministic analog combiner
$\mathbf{A}_r\in\mathbb{C}^{\tilde N_r\times N_r}$ as
\begin{equation}
    \bsf{u}_{r,t}
    =
    \mathbf{A}_r\bsf{y}_{r,t},
    \qquad
    r\in\{c,s\},\quad t=1,\ldots,T.
    \label{eq:snapshot_combining}
\end{equation}
The combiner is constrained to be semi-unitary,
\begin{equation}
    \mathbf{A}_r\mathbf{A}_r^H
    =
    \mathbf{I}_{\tilde N_r},
    \qquad
    r\in\{c,s\},
    \label{eq:semi_unitary_combiner}
\end{equation}
which prevents noise amplification and preserves the spatial whiteness of the receiver noise after combining.

The real and imaginary parts of each entry of $\bsf{u}_{r,t}$ are quantized independently using a $K$-level subtractive dithered uniform quantizer, shown in Fig.~\ref{fig:dithered_quantizer}. For a real input $x$, the quantizer output is
\begin{equation}
    Q_K(x)
    =
    q_K(x+\mathsf{d})-\mathsf{d},
    \label{eq:subtractive_quantizer}
\end{equation}
where $\mathsf{d}$ is an independent dither uniformly distributed over
$[-\Delta_r/2,\Delta_r/2]$. The map $q_K(\cdot)$ is a $K$-level uniform scalar quantizer with dynamic range $[-\gamma_r,\gamma_r]$ and step size $\Delta_r = 2\gamma_r/{K}$.
The dynamic range is partitioned into $K$ subintervals of width $\Delta_r$, whose midpoints are the reconstruction levels. Inputs outside $[-\gamma_r,\gamma_r]$ are saturated to the nearest reconstruction level, namely $\pm(\gamma_r-\Delta_r/2)$.

Under Schuchman's conditions \cite{schuchman1964dither,gray1993dithered}, and assuming negligible overload, the subtractive dithered quantizer yields an additive-noise representation in which the quantization error is statistically independent of the input
\cite{atrsaei2026capacity}. Thus, for each receiver $r\in\{c,s\}$ and snapshot $t$,
\begin{equation}
    \bsf{z}_{r,t}
    =
    \bsf{u}_{r,t}
    +
    \bsf{w}_{q,r,t},
    \label{eq:additive_quantization_model}
\end{equation}
where the entries of $\bsf{w}_{q,r,t}$ are independent, with independent real and imaginary parts uniformly distributed over $[-\Delta_r/2,\Delta_r/2]$. Consequently, 
\begin{equation}
    \mathbb{E}\!\left[|(\bsf{w}_{q,r,t})_i|^2\right]
    =
    \frac{\Delta_r^2}{6}.
\end{equation}

The dynamic range $\gamma_r$ must be chosen large enough so that the dithered input remains within $[-\gamma_r,\gamma_r]$ with high probability. For notational compactness, we stack the samples over one coherence block as $\bsf{u}_r = \mathrm{vec}\bigl([\bsf{u}_{r,1},\ldots,\bsf{u}_{r,T}]\bigr)$, and define $\bsf{d}_r$, $\bsf{w}_{q,r}$, $\bsf{z}_r$, and the other block-level vectors analogously. Following
\cite{shlezinger2019asymptotic,shlezinger2019hardware,ruan2024taskbased}, we set
\begin{equation}
    \gamma_r
    =
    \eta
    \sqrt{
    \max_{i=1,\ldots,\tilde N_rT}
    \mathbb{E}\!\left[
        \left|(\bsf{u}_r+\bsf{d}_r)_i\right|^2
    \right]},
    \qquad
    r\in\{c,s\},
    \label{eq:gamma_definition}
\end{equation}
where $\eta>0$ controls the overload probability. Since
\begin{align}
    \mathbb{E}\!\left[
        \left|(\bsf{u}_r+\bsf{d}_r)_i\right|^2
    \right]
    &=
    \left(\mathbf{R}_{\bsf{u}_r}\right)_{i,i}
    +
    \frac{\Delta_r^2}{6},
\end{align}
and since $\Delta_r=2\gamma_r/K$, we obtain
\begin{equation}
    \gamma_r^2
    =
    \kappa
    \max_{i=1,\ldots,\tilde N_rT}
    \left(\mathbf{R}_{\bsf{u}_r}\right)_{i,i},
    \label{eq:dynamic_range}
\end{equation}
where
\begin{equation}
    \kappa
    \triangleq
    \eta^2
    \left(
        1-\frac{2\eta^2}{3K^2}
    \right)^{-1}.
    \label{eq:kappa_def}
\end{equation}
This expression is well defined when $\eta < \sqrt{3/2}\,K$.

The stacked pre-quantization and quantized signals are
\begin{equation}
    \bsf{u}_r
    =
    (\mathbf{I}_T\otimes\mathbf{A}_r)\bsf{y}_r,
    \qquad
    \bsf{z}_r
    =
    \bsf{u}_r+\bsf{w}_{q,r},
    \label{eq:stacked_receiver_model}
\end{equation}
where $\bsf{y}_r\triangleq\mathrm{vec}(\bsf{Y}_r)$. Substituting
\eqref{eq:unquantized_model} into \eqref{eq:stacked_receiver_model} gives the quantized sensing and communication models
\begin{subequations}
\label{eq:quantized_models}
\begin{align}
    \bsf{z}_s
    &=
    (\bsf{X}^T\otimes\mathbf{A}_s)\bsf{g}
    +
    \bar{\bsf{w}}_s,
    \label{eq:quantized_sensing_model}\\
    \bsf{z}_c
    &=
    (\mathbf{I}_T\otimes\mathbf{A}_c\bsf{H})\bsf{x}
    +
    \bar{\bsf{w}}_c,
    \label{eq:quantized_comm_model}
\end{align}
\end{subequations}
where $\bsf{x}\triangleq\mathrm{vec}(\bsf{X})$ and
\begin{equation}
    \bar{\bsf{w}}_r
    \triangleq
    (\mathbf{I}_T\otimes\mathbf{A}_r)\bsf{w}_r
    +
    \bsf{w}_{q,r},
    \qquad
    r\in\{c,s\},
    \label{eq:effective_noise_vec}
\end{equation}
is the effective noise vector. The noise components $\bsf{w}_{r}$ and $\bsf{w}_{q,r}$ are mutually independent and have i.i.d. entries with variances $\sigma_{r}^2$ and $\Delta_r^2/6=2 \gamma_r^2/(3K^2)$, respectively.
Combining these observations with \eqref{eq:dynamic_range} and the semi-unitary constraint \eqref{eq:semi_unitary_combiner}, the effective noise covariance becomes
\begin{equation}
    \mathbf{R}_{\bar{\bsf{w}}_r}
    =
    \sigma_{0,r}^{2}\mathbf{I}_{\tilde N_rT},
    \qquad
    r\in\{c,s\},
    \label{eq:effective_noise_covariance}
\end{equation}
where, for each $r\in\{c,s\}$,
\begin{equation}
    \sigma_{0,r}^{2}
    \triangleq
    \sigma_r^2
    +
    \beta
    \max_{i=1,\ldots,\tilde N_rT}
    \left(\mathbf{R}_{\bsf{u}_r}\right)_{i,i},
    \qquad
    \beta
    \triangleq
    \frac{2\kappa}{3K^2}.
    \label{eq:effective_noise_variance}
\end{equation}

\subsection{Performance Metrics}
\label{subsec:performance_metrics}

\subsubsection{Sensing distortion}

The sensing performance is measured by the normalized LMMSE in estimating the target-response vector $\bsf{g}$. Since the sensing receiver knows the realization of the transmitted waveform $\bsf{X}=\mathbf{X}$, the conditional sensing model is linear in $\bsf{g}$. The LMMSE estimator is therefore
\cite{kay1993estimation}
\begin{equation}
    \hat{\bsf{g}}
    =
    \mathbf{R}_{\bsf{g}}
    \tilde{\mathbf{X}}^H
    \bigl(
        \tilde{\mathbf{X}}
        \mathbf{R}_{\bsf{g}}
        \tilde{\mathbf{X}}^H
        +
        \mathbf{R}_{\bar{\bsf{w}}_s}
    \bigr)^{-1}
    \bsf{z}_s,
    \label{eq:lmmse_estimator}
\end{equation}
where
\begin{equation}
    \tilde{\mathbf{X}}
    \triangleq
    \mathbf{X}^T\otimes\mathbf{A}_s.
\end{equation}
The corresponding LMMSE error covariance matrix is
\begin{equation}
    \mathbf{R}_{\bsf{g}\mid\bsf{z}_s,\bsf{X}=\mathbf{X}}
    =
    \mathbf{R}_{\bsf{g}}
    -
    \mathbf{R}_{\bsf{g}}
    \tilde{\mathbf{X}}^H
    \bigl(
        \tilde{\mathbf{X}}
        \mathbf{R}_{\bsf{g}}
        \tilde{\mathbf{X}}^H
        +
        \mathbf{R}_{\bar{\bsf{w}}_s}
    \bigr)^{-1}
    \tilde{\mathbf{X}}
    \mathbf{R}_{\bsf{g}}.
    \label{eq:error_cov}
\end{equation}
The conditional LMMSE is given by the trace of \eqref{eq:error_cov}, which by the Woodbury matrix identity simplifies to:
\begin{equation}
\begin{aligned}
    &\sigma_{\bsf{g}\mid\bsf{X}=\mathbf{X}}^2(\mathbf{A}_s)\\
    &=
    \frac{1}{N_tN_s}
    \operatorname{Tr}\!\Biggl(
    \biggl[
        \mathbf{R}_{\bsf{g}}^{-1}
        +
        \frac{1}{\sigma_{0,s}^{2}}
        \bigl(
            \mathbf{X}^*\mathbf{X}^T
            \otimes
            \mathbf{A}_s^H\mathbf{A}_s
        \bigr)
    \biggr]^{-1}
    \Biggr).
\end{aligned}
\label{eq:sigma_g_inv}
\end{equation}
For a given input distribution $p_{\bsf{X}}$, the sensing distortion is obtained by minimizing the expected conditional LMMSE distortion over all admissible sensing combiners:
\begin{equation}
    \epsilon(p_{\bsf{X}})
    \triangleq
    \min_{\mathbf{A}_s:\, \mathbf{A}_s\mathbf{A}_s^H=\mathbf{I}_{\tilde N_s}}
    \mathbb{E}_{\bsf{X}}\!\left[
        \sigma_{\bsf{g}\mid\bsf{X}}^2(\mathbf{A}_s)
    \right].
    \label{eq:sensing_distortion}
\end{equation}

\subsubsection{Communication rate}

Since the communication receiver has perfect channel state information, for a given input distribution $p_{\bsf{X}}$, the following rate is achievable: 
\begin{equation}
    R(p_{\bsf{X}})
    \triangleq
    \frac{1}{T}
    \max_{\mathbf{A}_c:\, \mathbf{A}_c\mathbf{A}_c^H=\mathbf{I}_{\tilde N_c}}
    I(\bsf{X};\bsf{Z}_c\mid\bsf{H}),
    \label{eq:rate_px}
\end{equation}
where $\bsf{Z}_c
    \triangleq
    [\bsf{z}_{c,1},\ldots,\bsf{z}_{c,T}]$.

\section{Performance under Gaussian Signaling}
\label{sec:gaussian_performance}

We consider i.i.d.\ isotropic Gaussian signaling, i.e., the transmitted symbols are  independent across time and 
\begin{equation}
    \bsf{x}_t
    \sim
    \mathcal{CN}(\mathbf{0},P_0\mathbf{I}_{N_t}),
    \qquad
    t=1,\ldots,T.
    \label{eq:gaussian_input}
\end{equation}
Thus, $\mathbf{R}_{\bsf{X}}=P_0\mathbf{I}_{N_t}$. For this input distribution, we denote the sensing distortion $\epsilon(p_{\bsf{X}})$ and the communication rate $R(p_{\bsf{X}})$ by $\epsilon_G$ and $R_G$, respectively.
In the following, we characterize $\epsilon_G$ and provide bounds on $R_G$, which is difficult to characterize exactly due to the non-Gaussian effective noise. 

\subsection{Bounds on $R_G$}
\label{subsec:RG_bounds}

As shown in Appendix~\ref{App:sigma2},  under isotropic Gaussian signaling, irrespective of the choice of the linear combiner
and the channel $\mathbf{H}$,
\begin{equation}
    \sigma_{0,c}^{2}
    =
    \sigma_c^2+\beta(\sigma_c^2+N_tP_0).
    \label{eq:sigma0_nocsit}
\end{equation}
Let
$\lambda_1(\bsf{H}\bsf{H}^H)\geq\cdots\geq\lambda_{N_c}(\bsf{H}\bsf{H}^H)$
denote the ordered eigenvalues of $\bsf{H}\bsf{H}^H$, and define
\begin{equation}
    \RGL
    \triangleq
    \mathbb{E}_{\bsf{H}}\!\left[
    \sum_{i=1}^{\tilde{N}_c}
    \log\!\left(
        1+\frac{P_0\lambda_i(\bsf{H}\bsf{H}^H)}{\sigma_{0,c}^{2}}
    \right)
    \right],
    \label{eq:cg_nocsit}
\end{equation}
and
\begin{equation}
    \RGU
    \triangleq
    \RGL
    +
    \tilde{N}_c\log(\pi e\sigma_{0,c}^{2})
    -
    2\tilde{N}_c\,h\!\left(\Re(\mathsf{\bar{W}}_{c,i})\right),
    \label{eq:RGU}
\end{equation}
where $\Re(\mathsf{\bar{W}}_{c})$ denotes the real part of one scalar component of the effective communication noise. Its density is
\begin{equation}
    f_{\Re(\mathsf{\bar{W}}_c)}(w)
    =
    \frac{1}{\Delta_c}
    \left[
    Q\!\left(
        \frac{w-\Delta_c/2}{\sigma_c/\sqrt{2}}
    \right)
    -
    Q\!\left(
        \frac{w+\Delta_c/2}{\sigma_c/\sqrt{2}}
    \right)
    \right],
    \label{eq:real_eff_noise_pdf}
\end{equation}
with $Q(\cdot)$ denoting the standard Gaussian $Q$-function.

Notic that the lower bound  $R_{\textnormal{G}}$ only captures the dominant $\tilde{N}_c$ eigenmodes of $(\bsf{H}\bsf{H}^H)$, but not the smaller eigenmodes. 

\begin{theorem}[Bounds on $R_G$]
\label{thm:rate_bounds}
Under Gaussian signaling, the achievable rate satisfies 
\begin{equation}
    \RGL
    \leq
    R_{\textnormal{G}}
    \leq
    \RGU.
    \label{eq:rate_sandwich}
\end{equation}
\end{theorem}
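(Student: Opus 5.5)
Fix a realization $\bsf{H}=\mathbf{H}$ and a semi-unitary combiner $\mathbf{A}_c$. Since the input is i.i.d.\ across time and the effective noise $\bar{\bsf{w}}_c$ has independent entries, independent of $\bsf{x}$ (see \eqref{eq:additive_quantization_model}--\eqref{eq:effective_noise_covariance}), the mutual information in \eqref{eq:rate_px} factorizes as $I(\bsf{X};\bsf{Z}_c\mid\bsf{H}=\mathbf{H})=T\,I(\bsf{x}_1;\bsf{z}_{c,1}\mid\bsf{H}=\mathbf{H})$. The per-snapshot model is $\bsf{z}=\mathbf{A}_c\mathbf{H}\bsf{x}_1+\bar{\bsf{w}}$, with $\bsf{x}_1\sim\mathcal{CN}(\mathbf{0},P_0\mathbf{I})$. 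By \eqref{eq:sigma0_nocsit}, $\bar{\bsf{w}}$ has covariance $\sigma_{0,c}^2\mathbf{I}_{\tilde N_c}$ regardless of $\mathbf{A}_c$ and $\mathbf{H}$. Its $2\tilde N_c$ real components are i.i.d., each distributed as the sum of a Gaussian and a uniform variable, with density \eqref{eq:real_eff_noise_pdf}. Write $I=h(\bsf{z})-h(\bar{\bsf{w}})$ with $h(\bar{\bsf{w}})=2\tilde N_c\,h(\Re(\mathsf{\bar W}_c))$.

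\emph{Lower bound.} I will invoke the worst-case additive-noise result: for a Gaussian input and any noise independent of it with covariance $\mathbf{K}$, the mutual information is at least its value when the noise is replaced by $\mathcal{CN}(\mathbf{0},\mathbf{K})$. This gives
\[
I(\bsf{x}_1;\bsf{z}\mid\mathbf{H})\ge\log\det\!\Bigl(\mathbf{I}+\tfrac{P_0}{\sigma_{0,c}^2}\mathbf{A}_c\mathbf{H}\mathbf{H}^H\mathbf{A}_c^H\Bigr).
\]
I then choose $\mathbf{A}_c$ to be the rows $\mathbf{U}_1^H$, where $\mathbf{U}_1$ collects the $\tilde N_c$ leading eigenvectors of $\mathbf{H}\mathbf{H}^H$. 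This choice is admissible because the combiner is only required to be semi-unitary, and the receiver has CSI, so $\mathbf{A}_c$ may depend on $\mathbf{H}$. The right-hand side then equals $\sum_{i\le\tilde N_c}\log(1+P_0\lambda_i/\sigma_{0,c}^2)$. Maximizing over $\mathbf{A}_c$ in \eqref{eq:rate_px} and averaging over $\bsf{H}$ yields $R_G\ge\RGL$.

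\emph{Upper bound.} Since the covariance of $\bsf{z}$ is $P_0\mathbf{A}_c\mathbf{H}\mathbf{H}^H\mathbf{A}_c^H+\sigma_{0,c}^2\mathbf{I}$, the maximum-entropy property gives
\[
h(\bsf{z})\le\log\det\!\bigl(\pi e(P_0\mathbf{A}_c\mathbf{H}\mathbf{H}^H\mathbf{A}_c^H+\sigma_{0,c}^2\mathbf{I})\bigr).
\]
Subtracting $h(\bar{\bsf{w}})$ leaves
\[
\log\det\!\Bigl(\mathbf{I}+\tfrac{P_0}{\sigma_{0,c}^2}\mathbf{A}_c\mathbf{H}\mathbf{H}^H\mathbf{A}_c^H\Bigr)+\tilde N_c\log(\pi e\sigma_{0,c}^2)-2\tilde N_c\,h(\Re(\mathsf{\bar W}_c)).
\]
The last two terms do not depend on $\mathbf{A}_c$. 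For the first term, Poincaré's separation (Cauchy interlacing) theorem states that the eigenvalues of $\mathbf{A}_c\mathbf{H}\mathbf{H}^H\mathbf{A}_c^H$, for semi-unitary $\mathbf{A}_c$, satisfy $\mu_i\le\lambda_i(\mathbf{H}\mathbf{H}^H)$ for $i\le\tilde N_c$. Hence the log-det is at most $\sum_{i\le\tilde N_c}\log(1+P_0\lambda_i/\sigma_{0,c}^2)$. Taking the maximum over $\mathbf{A}_c$ and the expectation over $\bsf{H}$ gives $R_G\le\RGU$.

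\emph{Main obstacle.} The delicate step is justifying that $\sigma_{0,c}^2$ and the independent uniform-plus-Gaussian structure of $\bar{\bsf{w}}$ hold uniformly over all admissible $\mathbf{A}_c$. The worst-case-noise and interlacing steps then apply cleanly, and the same noise law appears in both bounds. This uniformity is exactly what Appendix~\ref{App:sigma2} and the Schuchman additive model provide. The remaining care point is to check that the worst-case noise lemma applies in the complex circular setting with a non-circular-but-independent-component noise. I would handle this by working with the real $2\tilde N_c$-dimensional representation.
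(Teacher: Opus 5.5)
Your proposal is correct and follows essentially the same route as the paper. Both arguments single-letterize over the block. For the lower bound, both apply the worst-case Gaussian-noise lemma at the $\mathbf{H}$-dependent dominant-eigenvector combiner $\mathbf{A}_c^{\circ}(\mathbf{H})$. For the upper bound, both combine the maximum-entropy bound on $h(\bsf{z}_c\mid\bsf{H}=\mathbf{H})$, the i.i.d.\ per-component decomposition of $h(\bar{\bsf{w}}_c)$, and Poincar\'e separation.

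The differences are cosmetic. You invoke Poincar\'e separation only in the upper bound, which is the only place it is logically needed. You also use the real $2\tilde N_c$-dimensional representation to justify applying the worst-case-noise lemma to the proper but not circularly distributed effective noise, a point the paper leaves implicit.
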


\begin{IEEEproof}
See Appendix~\ref{app:proof_rate_bounds}. 
\end{IEEEproof}

\begin{remark}
The upper bound in Theorem~\ref{thm:rate_bounds} is also an upper bound on  capacity. As a consequence, when the two bounds coincide, isotropic Gaussian signaling is capacity-achieving. Our numerical results in Section~\ref{sec:numerical_results} show that this is indeed the case  in the low and moderate SNR regimes.
\end{remark}

The bounds in \eqref{eq:rate_sandwich} admit simple interpretations in several asymptotic regimes. We express these regimes in terms of 
\begin{equation}
    \mathrm{SNR}
    \triangleq
    \frac{N_tP_0}{\sigma_c^2}.
    \label{eq:SNR_def}
\end{equation}

\textit{Low SNR.}
For $P_0/\sigma_c^2\to 0$, we have
$\sigma_{0,c}^{2}=\sigma_c^2(1+\beta)+o(1)$. Using
$\log(1+x)=x+o(x)$ gives
\begin{equation}
    \RGL
    =
    \frac{P_0}{\sigma_c^2(1+\beta)}
    \mathbb{E}_{\bsf{H}}\!\left[
        \sum_{i=1}^{\tilde{N}_c}
        \lambda_i(\bsf{H}\bsf{H}^H)
    \right]
    +
    o(P_0).
    \label{eq:low_snr_asymptotic}
\end{equation}
Thus, quantization induces a multiplicative penalty $1+\beta$ on the low-SNR slope.

\textit{High SNR with fixed resolution.}
For $P_0/\sigma_c^2\to\infty$ with fixed $K$, the effective-noise variance grows linearly with $P_0$:
\begin{equation}
    \sigma_{0,c}^{2}
    =
    \beta N_tP_0+o(P_0).
\end{equation}
Consequently, the lower bound saturates as
\begin{equation}
    \lim_{P_0\to\infty}
    \RGL
    =
    \mathbb{E}_{\bsf{H}}\!\left[
    \sum_{i=1}^{\tilde{N}_c}
    \log\!\left(
        1+\frac{\lambda_i(\bsf{H}\bsf{H}^H)}{\beta N_t}
    \right)
    \right].
    \label{eq:high_snr_asymptotic}
\end{equation}

\textit{High resolution.}
As $\Delta_c\to 0$, the quantization noise vanishes and
\begin{equation}
    h\!\left(\Re(\mathsf{\bar{W}}_c)\right)
    \to
    \frac{1}{2}\log(\pi e\sigma_c^2).
\end{equation}
Therefore, the gap $\RGU-\RGL$ tends to zero, and the lower and upper bounds become asymptotically tight. In this regime, both bounds converge to the Gaussian MIMO capacity without quantization.

\textit{Massive MIMO.}
By the favorable-propagation and channel-hardening properties of i.i.d.\ Rayleigh massive MIMO channels \cite{bjornson2017massive}, as $N_t/\tilde{N}_c\to\infty$,
\begin{equation}
    \RGL
    \longrightarrow
    \tilde{N}_c
    \log\!\left(
        1+\frac{N_tP_0}{\sigma_{0,c}^{2}}
    \right).
    \label{eq:massive_mimo_asymptotic_capacity}
\end{equation}

\subsection{Closed-Form Expression for $\epsilon_G$}
\label{subsec:eps_G_closed_form}

Adapting \cite[Appendix~A]{ruan2024taskbased} to the present model, we obtain:

\begin{theorem}[Closed-form expression for $\epsilon_G$]
\label{pro:pcs_sensing_bound}
Under i.i.d.\ isotropic Gaussian signaling,  i.e., $\mathbf R_{\boldsymbol{X}}^*= P_0 \mathbf I_{N_t}$,\footnote{The expression in
\eqref{eq:epsilon_G_general} holds
for any feasible input covariance $\mathbf{R}_{\bsf{X}}\succeq\mathbf{0}$
satisfying $\operatorname{Tr}(\mathbf{R}_{\bsf{X}})=N_tP_0$.}
\begin{IEEEeqnarray}{rCl}
    \epsilon_G
    &= &\frac{1}{N_tN_s}\operatorname{Tr}(\mathbf{R}_{B}\otimes\mathbf{R}_{A})\nonumber\\
  &  &- \frac{1}{N_tN_s}\max_{\{\sigma_i^2\geq0\}}
    \sum_{n=1}^{N_t}\sum_{i=1}^{\tilde{N}_s}
    \frac{\operatorname{Tr}(\mathbf{R}_{B})}{N_t} \nonumber \\
    &&\; \cdot \mathbb{E}_{\bsf{X}}\!\Bigg[
    \frac{\lambda_{A,i}\,\lambda'_{n}\,\sigma_i^2}
    {\lambda'_{n}\sigma_i^2 + \frac{\beta}{\tilde{N}_s}
        \operatorname{Tr}(\mathbf{R}_{\bsf{X}}^{*}\mathbf{R}_{B})
        \sum_{j=1}^{\tilde{N}_s}\sigma_j^2 + \sigma_s^2(1+\beta)}\Bigg],\nonumber\\
    \label{eq:epsilon_G_general}
\end{IEEEeqnarray}
where $\{\lambda_{A,i}\}$ and $\{\lambda'_n\}$ are the eigenvalues of 
$\mathbf{R}_{A}$ and $\mathbf{R}_{B}^{1/2}\bsf{X}^{*}\bsf{X}^{T}(\mathbf{R}_{B}^{1/2})^H$
 in descending order, and maximization is subject to
\begin{equation}
    \sum_{i=N_s-\tilde{N}_s+1}^{N_s}\lambda_{A,i}
    \;\leq\;
    \sum_{i=1}^{\tilde{N}_s}\sigma_i^2
    \;\leq\;
    \sum_{i=1}^{\tilde{N}_s}\lambda_{A,i}.\label{eq:pcs_sensing_sigma_constraint}
\end{equation}

\end{theorem}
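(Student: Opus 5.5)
The plan is to start from the conditional LMMSE expression \eqref{eq:sigma_g_inv} and reduce the matrix inverse to a sum of scalar terms. The reduction uses a joint eigen-decomposition enabled by the Kronecker structure $\mathbf{R}_{\bsf{g}}=\mathbf{R}_B\otimes\mathbf{R}_A$, following \cite[Appendix~A]{ruan2024taskbased}.

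\textbf{Step 1: express the effective noise variance.} First I compute $\sigma_{0,s}^2$ for the sensing branch using \eqref{eq:effective_noise_variance}. With $\bsf{u}_{s,t}=\mathbf{A}_s(\bsf{G}\bsf{x}_t+\bsf{w}_{s,t})$, the $i$th diagonal entry of $\mathbf{R}_{\bsf{u}_s}$ equals $\mathbf{a}_i^H\mathbf{R}_A\mathbf{a}_i\,\operatorname{Tr}(\mathbf{R}_{\bsf{X}}^{*}\mathbf{R}_B)+\sigma_s^2$, where $\mathbf{a}_i^H$ is the $i$th row of $\mathbf{A}_s$. This uses $\mathbb{E}[\bsf{G}_0\mathbf{M}\bsf{G}_0^H]=\operatorname{Tr}(\mathbf{M})\mathbf{I}$.

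The max over $i$ makes the problem non-smooth. I will handle it as in the cited work: restrict to combiners whose diagonal entries $\sigma_i^2\triangleq\mathbf{a}_i^H\mathbf{R}_A\mathbf{a}_i$ are equalized. Any Hermitian matrix can be rotated by a unitary within its row space so that its diagonal becomes constant. Equalization does not change $\mathbf{A}_s^H\mathbf{A}_s$, and hence leaves the LMMSE unchanged. The max then becomes the average, which yields the term $\frac{\beta}{\tilde N_s}\operatorname{Tr}(\mathbf{R}_{\bsf{X}}^*\mathbf{R}_B)\sum_j\sigma_j^2+\sigma_s^2(1+\beta)$.

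\textbf{Step 2: diagonalize the information matrix.} Next I rewrite $\mathbf{R}_{\bsf{g}}^{-1}+\sigma_{0,s}^{-2}(\mathbf{X}^*\mathbf{X}^T\otimes\mathbf{A}_s^H\mathbf{A}_s)$ via $\mathbf{R}_{\bsf{g}}^{1/2}$. The error trace becomes $\operatorname{Tr}(\mathbf{R}_{\bsf g})-\operatorname{Tr}\big(\mathbf{R}_{\bsf g}\tilde{\mathbf X}^H(\cdot)^{-1}\tilde{\mathbf X}\mathbf{R}_{\bsf g}\big)$. The second term involves $\mathbf{R}_B^{1/2}\mathbf{X}^*\mathbf{X}^T\mathbf{R}_B^{1/2}\otimes \mathbf{R}_A^{1/2}\mathbf{A}_s^H\mathbf{A}_s\mathbf{R}_A^{1/2}$, together with the weights $\mathbf{R}_B$ and $\mathbf{R}_A$.

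The optimal $\mathbf{A}_s$ should align with eigenvectors of $\mathbf{R}_A$, up to the rotation of Step~1. With that alignment the receive-side factor becomes diagonal with entries $\sigma_i^2$ and weights $\lambda_{A,i}$. The transmit side gives eigenvalues $\lambda'_n$. The reduction then yields per-mode terms $\frac{\lambda_{A,i}\lambda'_n\sigma_i^2}{\lambda'_n\sigma_i^2+\sigma_{0,s}^2}$.

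The factor $\operatorname{Tr}(\mathbf{R}_B)/N_t$ arises because the transmit weight $\mathbf{R}_B$ does not commute with $\mathbf{X}^*\mathbf{X}^T$ in general. For isotropic Gaussian $\bsf{X}$ the eigenvectors of $\mathbf{R}_B^{1/2}\bsf{X}^*\bsf{X}^T\mathbf{R}_B^{1/2}$ are not aligned with $\mathbf{R}_B$. Following the cited adaptation, I would average the $\mathbf{R}_B$-weight over them, which replaces it by its normalized trace. This averaging is where I am least certain, and I would check it against \cite{ruan2024taskbased}.

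\textbf{Step 3: optimize over the combiner.} Finally, taking the expectation over $\bsf{X}$ and minimizing over $\mathbf{A}_s$ turns into maximizing over $\{\sigma_i^2\}$. The feasible set is the set of achievable diagonals of $\mathbf{A}_s\mathbf{R}_A\mathbf{A}_s^H$. By Ky Fan/Schur–Horn interlacing, the trace $\sum_i\sigma_i^2$ ranges between the sums of the smallest and largest $\tilde N_s$ eigenvalues of $\mathbf{R}_A$, which gives \eqref{eq:pcs_sensing_sigma_constraint}.

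\textbf{Main obstacle.} I expect the hardest part to be justifying that the eigen-aligned, diagonal-equalized combiner is optimal despite the max in $\sigma_{0,s}^2$ coupling all modes. Showing that the relaxed constraint set \eqref{eq:pcs_sensing_sigma_constraint} is exactly attainable is also delicate. The footnote's generality is consistent with this plan, since no step used Gaussianity beyond $\mathbf{R}_{\bsf X}$.
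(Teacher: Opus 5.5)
Your outline follows the paper's route. You equalize the diagonal of $\mathbf{A}_s\mathbf{R}_A\mathbf{A}_s^H$ by a left unitary rotation, reduce the trace to per-mode terms, align with the eigenvectors of $\mathbf{R}_A$, average the $\mathbf{R}_B$ weight, and bound the trace via Ky Fan. Your justification of the first step is clean: the rotation leaves $\mathbf{A}_s^H\mathbf{A}_s$, and hence everything except $\sigma_{0,s}^2$, unchanged.

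Two points in your version need repair. First, the $\sigma_i^2$ cannot be the diagonal entries $\mathbf{a}_i^H\mathbf{R}_A\mathbf{a}_i$. After your equalization these all equal $\frac{1}{\tilde N_s}\operatorname{Tr}(\mathbf{A}_s\mathbf{R}_A\mathbf{A}_s^H)$, and the per-mode terms would collapse to a single value. They must be the eigenvalues of $\mathbf{A}_s\mathbf{R}_A\mathbf{A}_s^H$, i.e., the squared singular values of $\bar{\mathbf{A}}_s=\mathbf{A}_s\mathbf{R}_A^{1/2}$, which the rotation preserves. Their sum is the same trace, so your noise formula and the Ky Fan constraint carry over.

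Second, the alignment must be derived, not posited. The paper writes $\mathbf{R}_B^{1/2}\mathbf{X}^*=\mathbf{U}'\boldsymbol{\Sigma}'\mathbf{V}'^H$ and $\bar{\mathbf{A}}_s=\mathbf{U}_s\boldsymbol{\Sigma}_s\mathbf{V}_s^H$. The subtracted trace then becomes $\operatorname{Tr}[(\mathbf{U}'^H\mathbf{R}_B\mathbf{U}'\otimes\mathbf{V}_s^H\mathbf{R}_A\mathbf{V}_s)\mathbf{D}]$, with $\mathbf{D}$ diagonal with entries $\lambda'_n\sigma_i^2/(\lambda'_n\sigma_i^2+\sigma_{0,s}^2)$. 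Only the diagonals $d_{B,n}$ and $d_{A,i}$ enter. Since $(d_{A,i})$ is majorized by $(\lambda_{A,i})$ and the coefficients decrease in $i$, the choice $d_{A,i}=\lambda_{A,i}$ is optimal.

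The two steps you flag are genuine gaps, and the paper's argument does not close them in general. For the averaging, replacing $d_{B,n}=\mathbf{u}_n'^H\mathbf{R}_B\mathbf{u}'_n$ by $\operatorname{Tr}(\mathbf{R}_B)/N_t$ inside $\mathbb{E}_{\bsf{X}}\bigl[d_{B,n}\lambda'_n\sigma_i^2/(\lambda'_n\sigma_i^2+\sigma_{0,s}^2)\bigr]$ requires the eigenvectors of $\bsf{W}=\mathbf{R}_B^{1/2}\bsf{X}^*\bsf{X}^T\mathbf{R}_B^{1/2}$ to be isotropic and independent of its eigenvalues. The paper asserts this for the Wishart-type matrix $\bsf{W}$, but it holds only for $\mathbf{R}_B\propto\mathbf{I}$, where $d_{B,n}$ is deterministic anyway. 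Otherwise it would force $\mathbb{E}[\bsf{W}]=TP_0\mathbf{R}_B$ to be a multiple of $\mathbf{I}_{N_t}$.

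For attainability, take $\mathbf{A}_s=\mathbf{U}_s\boldsymbol{\Sigma}_s\mathbf{V}_s^H\mathbf{R}_A^{-1/2}$ with $\mathbf{V}_s$ aligned to $\mathbf{R}_A$ in matching order. Imposing $\mathbf{A}_s\mathbf{A}_s^H=\mathbf{I}_{\tilde N_s}$ then forces $\sigma_i^2=\lambda_{A,i}$. So maximizing over free $\{\sigma_i^2\}$ under the trace constraint alone is a relaxation, and it is strict in general. For instance, take $\tilde N_s=N_s$ and $\mathbf{R}_B=\mathbf{I}$. Every admissible $\mathbf{A}_s$ is unitary, and $\epsilon_G$ equals the stated right-hand side evaluated at $\sigma_i^2=\lambda_{A,i}$ without the maximization. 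Yet at low SNR the objective is essentially linear in $\sigma_i^2$ with weights $\lambda_{A,i}$. The relaxed maximum therefore places all of $\operatorname{Tr}(\mathbf{R}_A)$ on $i=1$ and is strictly larger unless $\mathbf{R}_A\propto\mathbf{I}$.

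Hence neither your outline nor the paper's proof yields the stated equality for general $\mathbf{R}_A$, $\mathbf{R}_B$. What the argument does establish is an inequality. Majorization applied to both $(d_{A,i})$ and $(d_{B,n})$ bounds the subtracted term from above for every admissible $\mathbf{A}_s$. Consequently, the right-hand side with $\operatorname{Tr}(\mathbf{R}_B)/N_t$ replaced by the ordered eigenvalues $\lambda_{B,n}$ of $\mathbf{R}_B$ is a lower bound on $\epsilon_G$; this is the stated expression when $\mathbf{R}_B\propto\mathbf{I}$.
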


\begin{IEEEproof}
See Appendix~\ref{app:proof_pcs_sensing_bound}.
\end{IEEEproof}

\section{Numerical Results}
\label{sec:numerical_results}

Unless stated otherwise, the transmitter has $N_t=16$ antennas and each receiver has $16$ antennas, i.e., $N_c=N_s=16$.  The coherence  length is $T=40$ channel uses, and the noise variances are $\sigma_c^2=\sigma_s^2=10^{-3}$. The dither overload factor is set to $\eta=1$.

We define the combining ratios as 
\begin{equation}
   r_{r}\triangleq \tilde N_r/N_r, \quad r\in\{c,s\},
\end{equation} 
and use the SNR defined in \eqref{eq:SNR_def}; since $\sigma_c^2=\sigma_s^2$ here, it coincides with $N_tP_0/\sigma_s^2$ for the sensing plots.

The sensing channel is modeled with Jakes correlations~\cite{jakes1994microwave}
\[
\begin{aligned}
    (\mathbf{R}_A)_{n_1,n_2}
    &= J_0\!\left(\pi |n_1-n_2|\right),\\
    (\mathbf{R}_B)_{n_1,n_2}
    &= J_0\!\left(0.8\pi |n_1-n_2|\right),
\end{aligned}
\]
where $J_0(\cdot)$ is the zero-order Bessel function of the first kind.

Expectation over  $\bsf{H}$ in \eqref{eq:cg_nocsit}  is calculated using Monte Carlo method, and the entropy term $h(\textnormal{Re}(\mathsf{\bar{W}}_c))$ in  \eqref{eq:RGU} is computed by numerical integration of the density in \eqref{eq:real_eff_noise_pdf}.

Fig.~\ref{fig:R1} illustrates the bounds of
Theorem~\ref{thm:rate_bounds} on the Gaussian-signaling rate $R_G$ versus SNR, as well as the unquantized no-combiner reference $\mathbb{E}_{\bsf H}\!\big[\sum_i\log \big(1+P_0\lambda_i(\bsf H\bsf H^H)/\sigma_c^2\big)\big]$, for combining ratio $r_c=1$. For this value, the optimal combiner is the identity and the Gaussian rate $R_{\textnormal{G}}$ can be computed numerically. We observe that at low and moderate SNRs, the lower bound $\RGL$, the exact rate $R_G$, and the upper bound $\RGU$ are indistinguishably close. This holds  because the Gaussian noise dominates the effective noise. Since $\RGU$ is also an upper bound on capacity, we can also deduce that i.i.d. isotropic Gaussian signaling operates close to capacity in these regimes. At high SNR, a  moderate gap appears between both bounds and the exact $R_{\textnormal{G}}$, which is due to the non-Gaussianity of the effective noise induced by quantization. Both the Gaussian rate $R_{\textnormal{G}}$ and capacity saturate in the high SNR regime, because the quantization-noise variance increases with the transmit power. Finally, compared with the unquantized no-combiner benchmark, we deduce that at very low SNR, quantization induces only a negligible rate penalty, in particular if the quantizer resolution $K$ is sufficiently large ($K=4$).

\begin{figure}[t]
  \centering
  \includegraphics[width=.95\columnwidth]{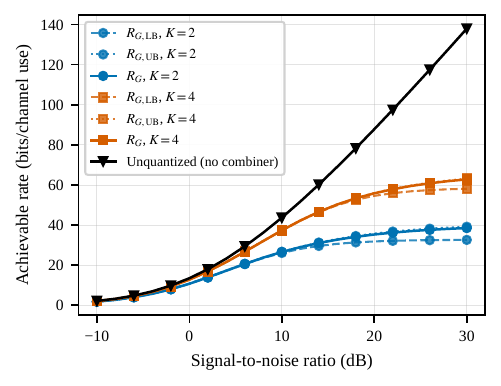}
  \caption{Lower and upper bounds, exact rate, and unquantized no-combiner reference plotted versus SNR for ADC resolutions $K\in\{2,4\}$ ($N_t=16$, $r_c=1$, and $T=40$).}
  \label{fig:R1}
\end{figure}

Fig.~\ref{fig:R2} shows the effect of the communication combining ratio $r_c$ on the rate, plotting the bounds of Theorem~\ref{thm:rate_bounds} versus SNR for $r_c\in\{0.5,0.75,1\}$ at fixed resolution $K=2$. For $r_c=0.5$ and $r_c=0.75$, only a fraction of the $16$ spatial dimensions are retained before quantization, whereas for $r_c=1$ all dimensions are kept and the exact rate $R_G$ can again be computed. At low SNR all combining ratios coincide, since in this regime the rate is noise-limited rather than dimension-limited. As the SNR increases, the impact of the combining ratio becomes more pronounced. The curves for $r_c=0.75$ and $r_c=1$ remain nearly indistinguishable up to the moderate-SNR regime, whereas the more substantial dimensionality reduction with $r_c=0.5$ starts to show a visible rate loss earlier 

\begin{figure}[t]
  \centering
  \includegraphics[width=.95\columnwidth]{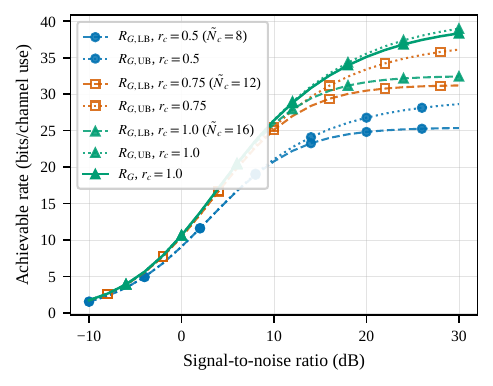}
  \caption{Lower and upper bounds, exact rate (for $r_c=1$), and unquantized no-combiner reference plotted versus SNR for combining ratios $r_c\in\{0.5, 0.75,1\}$ ($N_t=16$, $K=2$, $T=40$).}
  \label{fig:R2}
\end{figure}

{Fig.~\ref{fig:L2} compares the Gaussian-signaling LMMSE $\epsilon_G$ with the unquantized distortion obtained with the optimal linear combiner under the  Gaussian signaling. 
In the expression for $\epsilon_G$ in \eqref{eq:epsilon_G_general}, the
expectation has no closed form for a general $\mathbf{R}_B$. We therefore
approximate it by sample-average approximation (SAA): we draw independent
waveform samples according to \eqref{eq:gaussian_input}, compute the eigenvalues
of
$\mathbf{R}_{B}^{1/2}\bsf{X}^{*}\bsf{X}^{T}(\mathbf{R}_{B}^{1/2})^H$
for each sample, and replace $\mathbb{E}_{\bsf{X}}[\cdot]$ in
\eqref{eq:epsilon_G_general} by the corresponding empirical average. The
resulting finite-sample optimization over
$\{\sigma_i\}_{i=1}^{\tilde N_s}$ is then solved numerically under
\eqref{eq:pcs_sensing_sigma_constraint}.

\begin{figure}[ht!]
  \centering
  \includegraphics[width=0.95\columnwidth]{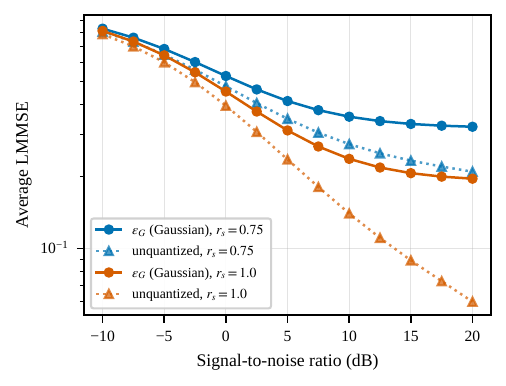}
  \caption{Gaussian-signaling distortion $\epsilon_G$ versus SNR for the Jakes channel ($N_t=16$, $K=2$, $r_s\in\{0.75,1\}$), with the unquantized reference.}
  \label{fig:L2}
\end{figure}

From Fig.~\ref{fig:L2}, we observe that $\epsilon_G$ saturates at high SNR.
Thus, quantization induces saturation not only in the communication rate, but
also in the sensing LMMSE. In contrast, the unquantized benchmark tends to zero
when $r_s=1$, since no spatial dimension is discarded. However, when
$r_s=0.75$, the unquantized benchmark also saturates because the linear combiner
reduces the dimension of the sensing observations before estimation. In fact,
the unquantized curve with $r_s=0.75$ saturates at a higher LMMSE level than the
quantized curve with $r_s=1$. This indicates that, once dimension reduction is present, the dominant sensing loss may come from discarding spatial observations rather than from quantization itself. 

\section{Conclusion and Future Work}
\label{sec:conclusion}

We studied a quantized MIMO ISAC system in which both the communication and sensing receivers employ analog spatial combining followed by scalar subtractive dithered quantization. Focusing on i.i.d.\ isotropic Gaussian signaling under no-CSIT and CSIR, we characterized the LMMSE sensing distortion using optimization and majorization tools, and derived information-theoretic lower and upper bounds on the corresponding achievable communication rate. The proposed upper bound is also an upper bound on capacity.
 
Numerical results suggest that the proposed lower and upper bounds on the Gaussian-input communication rate are tight at low and moderate SNRs, and thus Gaussian signaling is near-optimal. At high SNR, the achievable communication rates saturate at a level determined by the number of quantization levels $K$.  
 
For sensing, we derived a closed-form expression for the LMMSE under i.i.d.\ Gaussian signaling and a Kronecker sensing-channel model. Numerical results show that the LMMSE saturates at high SNR, whereas at low SNR it remains close to the corresponding unquantized LMMSE with the same spatial combiner. The saturation level increases when the sensing combining ratio is reduced, and for combining ratios below one, saturation can occur even without quantization. This shows that both quantization and spatial dimension reduction can fundamentally limit the sensing accuracy, whereas dimension reduction has only a minor effect on the achievable rate.
 
Future work includes tightening the capacity bounds, particularly in the high-SNR regime; see also \cite{atrsaei2026capacity}. When no spatial dimension reduction is applied at the communication receiver, the exact isotropic Gaussian-input rate can be evaluated numerically, revealing a gap to both bounds. A further direction is to characterize the full achievable rate--LMMSE region, beyond the communication-centric point considered here. This includes determining the sensing-optimal performance and its gain over the isotropic Gaussian-signaling LMMSE reported in this work.

\section*{Acknowledgment}

This work was supported by the ERC under Grant Agreement 101125691.

\appendices

\section{Proof of Theorem~\ref{thm:rate_bounds}}
\label{app:proof_rate_bounds}
 
We start by proving the expression for the effective communication noise $\sigma_{0,c}^2$ in \eqref{eq:sigma0_nocsit}, and then prove the desired upper and lower bounds through  comparison of  the quantized communication channel
\eqref{eq:quantized_comm_model} with an auxiliary Gaussian-noise channel of 
same noise covariance. 
 
\subsection{Effective Communication Noise $\sigma_{0,c}^2$}
\label{App:sigma2}
 

Under isotropic Gaussian signaling
$\mathbf{R}_{\bsf{X}}=P_0\mathbf{I}_{N_t}$, and since
$\mathbb{E}[\bsf{H}\bsf{H}^H]=N_t\mathbf{I}_{N_c}$:
\begin{equation}
    \mathbb{E}\!\left[\bsf{y}_{c,t}\bsf{y}_{c,t}^H\right]
    =
    P_0\,\mathbb{E}[\bsf{H}\bsf{H}^H]+\sigma_c^2\mathbf{I}_{N_c}
    =
    (N_tP_0+\sigma_c^2)\mathbf{I}_{N_c}.
    \label{eq:app_avg_rx_power}
\end{equation}
The semi-unitary constraint \eqref{eq:semi_unitary_combiner} implies
$\mathbf{R}_{\bsf{u}_c}
=\mathbf{A}_c\,\mathbb{E}[\bsf{y}_{c,t}\bsf{y}_{c,t}^H]\,\mathbf{A}_c^H
=(N_tP_0+\sigma_c^2)\mathbf{I}_{\tilde N_c}$, so that
$\max_i(\mathbf{R}_{\bsf{u}_c})_{i,i}=N_tP_0+\sigma_c^2$, irrespective of
$\mathbf{A}_c$ and of the realization $\mathbf{H}$. Substituting into
\eqref{eq:effective_noise_variance} recovers \eqref{eq:sigma0_nocsit}; in
particular, $\sigma_{0,c}^{2}$ is a constant that depends neither on the combiner
nor on the channel.
 \subsection{Lower bound}
\label{subsec:app_lower}
 
Since the input symbols are i.i.d.\ across $t$ and the channel is memoryless with
effective noise i.i.d.\ across $t$, the conditional mutual information
single-letterizes,
\begin{equation}
    \frac{1}{T}\,I(\bsf{X};\bsf{Z}_c\mid\bsf{H}=\mathbf{H})
    =
    I(\bsf{x}_1;\bsf{z}_{c,1}\mid\bsf{H}=\mathbf{H}).
    \label{eq:app_single_letter}
\end{equation}
We therefore drop the time index and write
$\bsf{z}_c=\mathbf{A}_c\mathbf{H}\bsf{x}+\bar{\bsf{w}}_c$ with
$\bsf{x}\sim\mathcal{CN}(\mathbf{0},P_0\mathbf{I}_{N_t})$. 
 
 
Notice next that by the worst additivc noise-property of the Gaussian distribution \cite{diggavi2001worst}, we obtain a lower bound on the mutual information $I(\bsf{x};\bsf{z}_c\mid\bsf{H}=\mathbf{H})$ by replacing the effective noise $\bar{\bsf{w}}_c$ with a CSCG vector of the same covariance matrix. 
$\sigma_{0,c}^{2}\mathbf{I}_{\tilde N_c}$. 
For isotropic  Gaussian inputs
$\bsf{x}\sim\mathcal{CN}(\mathbf{0},P_0\mathbf{I}_{N_t})$, this yields the lower bound 
\begin{IEEEeqnarray}{rCl}
  I(\bsf{x};\bsf{z}_c\mid\bsf{H}=\mathbf{H}) &\geq &  \log\det\!\left(   \mathbf{I}_{\tilde N_c}
        +
        \frac{P_0}{\sigma_{0,c}^{2}}\,
        \mathbf{A}_c\mathbf{H}\mathbf{H}^H\mathbf{A}_c^H
    \right) \IEEEeqnarraynumspace \\
    & \triangleq & g(\mathbf{A}_c).
    \label{eq:app_g_def}
\end{IEEEeqnarray}

Let now $\mathbf{H}\mathbf{H}^H=\sum_{i=1}^{N_c}\lambda_i(\bsf{H}\bsf{H}^H)\mathbf{v}_i\mathbf{v}_i^H$
be its eigendecomposition with
$\lambda_1(\bsf{H}\bsf{H}^H)\ge\cdots\ge\lambda_{N_c}(\bsf{H}\bsf{H}^H)$. Since
$\mathbf{A}_c\mathbf{A}_c^H=\mathbf{I}_{\tilde N_c}$, the matrix
$\mathbf{A}_c\mathbf{H}\mathbf{H}^H\mathbf{A}_c^H$ is a compression of
$\mathbf{H}\mathbf{H}^H$, and the Poincar\'e separation theorem
\cite[Thm.~4.3.28]{horn2012matrix} yields
\begin{equation}
    \lambda_i\!\left(\mathbf{A}_c\mathbf{H}\mathbf{H}^H\mathbf{A}_c^H\right)
    \le
    \lambda_i(\bsf{H}\bsf{H}^H),
    \qquad
    i=1,\ldots,\tilde N_c,
    \label{eq:app_poincare}
\end{equation}
with equality for all $i$ simultaneously when the rows of $\mathbf{A}_c$ are the
dominant eigenvectors $\mathbf{v}_1,\ldots,\mathbf{v}_{\tilde N_c}$. As
$g(\mathbf{A}_c)$ 
is increasing in each eigenvalue, the maximizer is
\begin{equation}
    \mathbf{A}_c^{\circ}(\mathbf{H})
    =
    \begin{bmatrix}
        \mathbf{v}_1 & \cdots & \mathbf{v}_{\tilde{N}_c}
    \end{bmatrix}^H,
    \label{eq:optimal_comm_combiner}
\end{equation}
and
\begin{equation}
    \max_{\mathbf{A}_c}g(\mathbf{A}_c)
    =
    \sum_{i=1}^{\tilde N_c}
    \log\!\left(
        1+\frac{P_0\lambda_i(\bsf{H}\bsf{H}^H)}{\sigma_{0,c}^{2}}
    \right).
    \label{eq:app_g_max}
\end{equation}

  Taking expectation over $\bsf{H}$ yields the desired lower bound  in 
\eqref{eq:cg_nocsit}. 

Fix $\mathbf{H}$ and any semi-unitary $\mathbf{A}_c$. The effective noise
$\bar{\bsf{w}}_c$ has zero mean, covariance
$\sigma_{0,c}^{2}\mathbf{I}_{\tilde N_c}$, and is independent of the Gaussian input.
Since Gaussian noise is the worst case additive noise of given noise covariance matrix \cite{diggavi2001worst},  
replacing
$\bar{\bsf{w}}_c$ by a CSCG vector of the same covariance can only decrease the
mutual information,
\begin{equation}
    I(\bsf{x};\bsf{z}_c\mid\bsf{H}=\mathbf{H})
    \ge
    g(\mathbf{A}_c).
    \label{eq:app_lb_pointwise}
\end{equation}
Evaluating \eqref{eq:app_lb_pointwise} at $\mathbf{A}_c^{\circ}(\mathbf{H})$ and
using \eqref{eq:app_g_max},
\begin{equation}
    \max_{\mathbf{A}_c}I(\bsf{x};\bsf{z}_c\mid\bsf{H}=\mathbf{H})
    \ge
    \sum_{i=1}^{\tilde N_c}
    \log\!\left(
        1+\frac{P_0\lambda_i(\bsf{H}\bsf{H}^H)}{\sigma_{0,c}^{2}}
    \right).
    \label{eq:app_lb_perH}
\end{equation}
Taking  expectation over $\bsf{H}$  yields the desired lower bound $R_{\textnormal{G}}\ge\RGL$.
 
\subsection{Upper bound}
\label{subsec:app_upper}
 
Fix $\mathbf{H}$ and any semi-unitary $\mathbf{A}_c$. Since the effective noise is
independent of the input,
\begin{equation}
    I(\bsf{x};\bsf{z}_c\mid\bsf{H}=\mathbf{H})
    =
    h(\bsf{z}_c\mid\bsf{H}=\mathbf{H})
    -
    h(\bar{\bsf{w}}_c).
    \label{eq:app_ub_split}
\end{equation}
The output covariance is
$\mathbf{R}_{\bsf{z}_c\mid\mathbf{H}}
=P_0\mathbf{A}_c\mathbf{H}\mathbf{H}^H\mathbf{A}_c^H+\sigma_{0,c}^{2}\mathbf{I}_{\tilde N_c}$.
Among all complex random vectors with a given covariance, the
circularly-symmetric Gaussian maximizes differential entropy
\cite[Thm.~8.6.5]{cover2006elements}, so
\begin{IEEEeqnarray}{rCl}
    h(\bsf{z}_c\mid\bsf{H}=\mathbf{H})
    & \le &
    \log\det\!\left(\pi e\,\mathbf{R}_{\bsf{z}_c\mid\mathbf{H}}\right)
    \nonumber\\
    & = &
    \tilde N_c\log(\pi e\,\sigma_{0,c}^{2})+g(\mathbf{A}_c).
    \label{eq:app_output_entropy}
\end{IEEEeqnarray}
For the noise term, the semi-unitary combiner keeps the receiver noise white, so
$\bar{\bsf{w}}_c$ has i.i.d.\ entries $\bar{\mathsf{W}}_{c,i}$, each with
independent and identically distributed real and imaginary parts; therefore
\begin{equation}
    h(\bar{\bsf{w}}_c)
    =
    \tilde N_c\,h(\bar{\mathsf{W}}_{c})
    =
    2\tilde N_c\,h\!\left(\Re(\mathsf{\bar{W}}_{c})\right),
    \label{eq:app_noise_entropy}
\end{equation}
where $\Re(\mathsf{\bar{W}}_{c})$ is the sum of a $\mathcal{N}(0,\sigma_c^2/2)$
term and an independent $\mathcal{U}[-\Delta_c/2,\Delta_c/2]$ term, with the
density \eqref{eq:real_eff_noise_pdf}. Combining
\eqref{eq:app_ub_split}--\eqref{eq:app_noise_entropy}, only $g(\mathbf{A}_c)$
depends on the combiner; maximizing over $\mathbf{A}_c$ via \eqref{eq:app_g_max}
and taking  expectation over $\bsf{H}$ yields the desired upper bound in  
\eqref{eq:RGU}.
 
Above proof does not rely on the Gaussian signaling assumption, and thus $R_{\textnormal{G,U}}$ is also an upper bound on capacity. 

\section{Proof of Theorem~\ref{pro:pcs_sensing_bound}}
\label{app:proof_pcs_sensing_bound}

We first derive the LMMSE error
$\sigma_{\bsf{g}\mid\bsf{X}=\mathbf{X}}^2(\mathbf{A}_s)$ for a fixed realization
$\bsf{X}=\mathbf{X}$ and combiner $\mathbf{A}_s$ in closed form, and then take the
expectation over $\bsf{X}$ and minimize over $\mathbf{A}_s$.
 
\subsection{Conditional LMMSE for a fixed $\bsf{X}=\mathbf{X}$}
\label{subsec:app_conditional}
 
Taking the trace of the error covariance \eqref{eq:error_cov}, substituting
$\mathbf{R}_{\bsf{g}}=\mathbf{R}_{B}\otimes\mathbf{R}_{A}$ and
$\mathbf{R}_{\bar{\bsf{w}}_s}=\sigma_{0,s}^2\mathbf{I}_{\tilde N_sT}$ from
\eqref{eq:effective_noise_covariance}, and applying the mixed-product rule
$(\mathbf{A}\otimes\mathbf{B})(\mathbf{C}\otimes\mathbf{D})=\mathbf{A}\mathbf{C}\otimes\mathbf{B}\mathbf{D}$
so that every factor separates across the two Kronecker blocks, trace cyclicity
yields
\begin{IEEEeqnarray}{rCl}
    \lefteqn{\sigma_{\bsf{g}\mid\bsf{X}=\mathbf{X}}^2(\mathbf{A}_s)} \nonumber \\
    &=&
    \frac{1}{N_tN_s}
    \operatorname{Tr}(\mathbf{R}_{B}\otimes\mathbf{R}_{A})
    \notag\\
    &&
    -\frac{1}{N_tN_s}
    \operatorname{Tr}\!\biggl(
    \bigl(
        \mathbf{X}^{T}\mathbf{R}_{B}^{2}\mathbf{X}^{*}
        \otimes
        \mathbf{A}_s\mathbf{R}_{A}^{2}\mathbf{A}_s^H
    \bigr)
    \notag\\
    &&\hspace{1.9cm} \cdot
    \Bigl[
        \mathbf{X}^{T}\mathbf{R}_{B}\mathbf{X}^{*}
        \otimes
        \mathbf{A}_s\mathbf{R}_{A}\mathbf{A}_s^H
        +
        \sigma_{0,s}^{2}\mathbf{I}_{\tilde N_sT}
    \Bigr]^{-1}
    \biggr).
    \nonumber \\\label{eq:app_pcs_conditional_lmmse}
\end{IEEEeqnarray}
The first trace does not depend on $\mathbf{X}$ and we keep it unchanged. In the
following, we first simplify the effective sensing noise variance
$\sigma_{0,s}^2$, and then show that the second trace, abbreviated by
$f(\mathbf{X})$, reduces to a scalar sum.
 
From the stacked sensing model, the pre-quantization vector is
$\bsf{u}_s=(\bsf{X}^T\otimes\mathbf{A}_s)\bsf{g}+(\mathbf{I}_T\otimes\mathbf{A}_s)\bsf{w}_s$,
so that its covariance matrix is
\begin{equation}
    \mathbf{R}_{\bsf{u}_s}
    =
    \mathbb{E}_{\bsf{X}}\!\left[\bsf{X}^{T}\mathbf{R}_{B}\bsf{X}^{*}\right]
    \otimes
    \mathbf{A}_s\mathbf{R}_{A}\mathbf{A}_s^H
    +
    \sigma_s^2\mathbf{I}_{\tilde N_sT},
    \label{eq:app_Rus}
\end{equation}
and, recalling
$\sigma_{0,s}^{2}=\sigma_s^2+\beta\max_{i}(\mathbf{R}_{\bsf{u}_s})_{i,i}$ from
\eqref{eq:effective_noise_variance}, the effective sensing noise variance is
\begin{IEEEeqnarray}{rCl}
    \sigma_{0,s}^{2}
    & = &
    \beta
    \max_{i=1,\ldots,\tilde N_sT}
    \left(
        \mathbb{E}_{\bsf{X}}\!\left[\bsf{X}^{T}\mathbf{R}_{B}\bsf{X}^{*}\right]
        \otimes
        \mathbf{A}_s\mathbf{R}_{A}\mathbf{A}_s^H
    \right)_{i,i}
    \nonumber \\
    && +\, \sigma_s^2(1+\beta).
    \label{eq:app_sigma0_s_pre}
\end{IEEEeqnarray}
We continue to simplify this expression. For i.i.d.\ columns
$\bsf{x}_t\sim\mathcal{CN}(\mathbf 0,\mathbf{R}_{\bsf{X}})$, the $(t_1,t_2)$ entry
of $\mathbb{E}_{\bsf{X}}[\bsf{X}^{T}\mathbf{R}_{B}\bsf{X}^{*}]$ is zero for
$t_1\neq t_2$ and equals $\operatorname{Tr}(\mathbf{R}_{\bsf{X}}^{*}\mathbf{R}_{B})$
for $t_1=t_2$, so that
$\mathbb{E}_{\bsf{X}}[\bsf{X}^{T}\mathbf{R}_{B}\bsf{X}^{*}]
=\operatorname{Tr}(\mathbf{R}_{\bsf{X}}^{*}\mathbf{R}_{B})\mathbf{I}_{T}$.
Introducing $\bar{\mathbf{A}}_s\triangleq\mathbf{A}_s\mathbf{R}_A^{1/2}$, so that
$\mathbf{A}_s\mathbf{R}_{A}\mathbf{A}_s^H=\bar{\mathbf{A}}_s\bar{\mathbf{A}}_s^H$,
\eqref{eq:app_sigma0_s_pre} becomes
\begin{equation}
    \sigma_{0,s}^{2}
    =
    \beta\, \operatorname{Tr}(\mathbf{R}_{\bsf{X}}^{*}\mathbf{R}_{B})
    \max_{i}\left(\bar{\mathbf{A}}_s\bar{\mathbf{A}}_s^H\right)_{i,i}
    +\sigma_s^2(1+\beta).
    \label{eq:app_sigma0_s_diag}
\end{equation}
 
Both $f(\mathbf{X})$ and $\mathbf{R}_{\bsf{u}_s}$ depend on the combiner only
through $\bar{\mathbf{A}}_s$, and, as shown later in this appendix, the left singular vectors of $\bar{\mathbf{A}}_s$ affect $f(\mathbf{X})$ through $\max_{i}(\mathbf{R}_{\bsf{u}_s})_{i,i}$. We
therefore choose them so as to minimize $\max_{i}(\mathbf{R}_{\bsf{u}_s})_{i,i}$.
Since any Hermitian PSD matrix $\mathbf{M}$ satisfies
$\min_{\bar{\mathbf{U}}}\max_{i}(\bar{\mathbf{U}}\mathbf{M}\bar{\mathbf{U}}^H)_{i,i}
=\operatorname{Tr}(\mathbf{M})/\tilde N_s$
\cite[Cor.~2.4]{palomar2007majorization}, and since
$\operatorname{Tr}(\bar{\mathbf{A}}_s\bar{\mathbf{A}}_s^H)=\sum_{i}\sigma_i^2$,
where $\{\sigma_i\}_{i=1}^{\tilde N_s}$ are the singular values of
$\bar{\mathbf{A}}_s$, the effective noise variance becomes
\begin{equation}
    \sigma_{0,s}^2
    =
    \frac{\beta}{\tilde N_s}\operatorname{Tr}\!\left(\mathbf{R}_{\bsf{X}}^{*}\mathbf{R}_{B}\right)
    \sum_{i=1}^{\tilde N_s}\sigma_i^2
    +
    \sigma_s^2(1+\beta).
    \label{eq:app_sigma0_s_final}
\end{equation}
 
The feasible set of the squared singular values
$\{\sigma_i^2\}_{i=1}^{\tilde{N}_s}$ follows from
$\sum_{i}\sigma_i^2=\operatorname{Tr}(\bar{\mathbf{A}}_s\bar{\mathbf{A}}_s^H)
=\operatorname{Tr}(\mathbf{A}_s\mathbf{R}_{A}\mathbf{A}_s^H)$ and the semi-unitary
constraint $\mathbf{A}_s\mathbf{A}_s^H=\mathbf{I}_{\tilde N_s}$. Denoting by
$\lambda_{A,1},\ldots,\lambda_{A,N_s}$ the eigenvalues of $\mathbf{R}_{A}$ in
decreasing order, Ky Fan's eigenvalue inequalities
\cite[Cor.~4.3.39]{horn2012matrix} bound the trace
$\operatorname{Tr}(\mathbf{A}_s\mathbf{R}_{A}\mathbf{A}_s^H)$ between the sums of
the $\tilde N_s$ smallest and the $\tilde N_s$ largest eigenvalues of
$\mathbf{R}_{A}$,
\begin{equation}
    \sum_{i=N_s-\tilde N_s+1}^{N_s}\lambda_{A,i}
    \;\le\;
    \sum_{i=1}^{\tilde N_s}\sigma_i^2
    \;\le\;
    \sum_{i=1}^{\tilde N_s}\lambda_{A,i}.
    \label{eq:app_sigma_constraint}
\end{equation}
 
We next show that, once optimized over $\bar{\mathbf{A}}_s$, the term
$f(\mathbf{X})$ reduces to a scalar sum. To this end, for a given realization
$\mathbf{X}$, consider the singular value decompositions
\begin{equation}
    \mathbf{R}_{B}^{1/2}\mathbf{X}^{*}
    =
    \mathbf{U}'\boldsymbol{\Sigma}'\mathbf{V}'^H,
    \qquad
    \bar{\mathbf{A}}_s
    =
    \mathbf{U}_s\boldsymbol{\Sigma}_s\mathbf{V}_s^H,
    \label{eq:app_svd_decompositions}
\end{equation}
with $\boldsymbol{\Sigma}'\in\mathbb{C}^{N_t\times T}$ and
$\boldsymbol{\Sigma}_s\in\mathbb{C}^{\tilde N_s\times N_s}$ rectangular diagonal.
Substituting \eqref{eq:app_svd_decompositions} into $f(\mathbf{X})$, using
$\mathbf{X}^{T}\mathbf{R}_B^{1/2}=(\mathbf{R}_B^{1/2}\mathbf{X}^{*})^H$ and the
trace identity
$\operatorname{Tr}(\mathbf{M}_1\mathbf{M}_2)=\operatorname{Tr}(\mathbf{M}_2\mathbf{M}_1)$,
the common unitary factor $\mathbf{V}'\otimes\mathbf{U}_s$ cancels and we obtain
\begin{equation}
    f(\mathbf{X})
    =
    \operatorname{Tr}\!\Bigl[
        (\mathbf{U}'^H\mathbf{R}_{B}\mathbf{U}'\otimes\mathbf{V}_s^H\mathbf{R}_{A}\mathbf{V}_s)\,
        \mathbf{D}
    \Bigr],
    \label{eq:app_f_with_D}
\end{equation}
with the diagonal matrix
$\mathbf{D}\triangleq(\boldsymbol{\Sigma}'\otimes\boldsymbol{\Sigma}_s^H)
(\boldsymbol{\Sigma}'^H\boldsymbol{\Sigma}'\otimes\boldsymbol{\Sigma}_s\boldsymbol{\Sigma}_s^H+\sigma_{0,s}^2\mathbf{I})^{-1}
(\boldsymbol{\Sigma}'^H\otimes\boldsymbol{\Sigma}_s)$.
In particular, $f(\mathbf{X})$ does not depend on $\mathbf{U}_s$, which is the
left-unitary freedom invoked above.
 
Let $\lambda'_n\triangleq(\sigma'_n)^2$, $n=1,\dots,N_t$, denote the nonzero
eigenvalues of
$\mathbf{R}_{B}^{1/2}\bsf{X}^{*}\bsf{X}^{T}(\mathbf{R}_{B}^{1/2})^H$, i.e., the
nonzero diagonal entries of $\boldsymbol{\Sigma}'\boldsymbol{\Sigma}'^H$, and
recall that $\sigma_i$, $i=1,\ldots,\tilde N_s$, are the diagonal entries of
$\boldsymbol{\Sigma}_s$. The $\bigl((n-1)N_s+i\bigr)$-th diagonal entry of
$\mathbf{D}$ is then
\begin{equation}
    D_{(n-1)N_s+i}
    =
    \frac{\lambda'_n\,\sigma_i^2}{\lambda'_n\,\sigma_i^2+\sigma_{0,s}^2}.
    \label{eq:app_D_entries}
\end{equation}
By the diagonality of $\mathbf{D}$, only the diagonal entries of
$\mathbf{U}'^{H}\mathbf{R}_{B}\mathbf{U}'$ and $\mathbf{V}_s^H\mathbf{R}_{A}\mathbf{V}_s$
contribute to the trace in \eqref{eq:app_f_with_D}, so that
\begin{equation}
    f(\mathbf{X})
    =
    \sum_{n=1}^{N_t}\sum_{i=1}^{\tilde N_s}
    d_{B,n}\,d_{A,i}\,
    \frac{\lambda'_n\,\sigma_i^2}{\lambda'_n\,\sigma_i^2+\sigma_{0,s}^2},
    \label{eq:app_f_scalar}
\end{equation}
where $d_{B,n}\triangleq(\mathbf{U}'^{H}\mathbf{R}_{B}\mathbf{U}')_{n,n}$ and
$d_{A,i}\triangleq(\mathbf{V}_s^H\mathbf{R}_{A}\mathbf{V}_s)_{i,i}$.
 
The vector $(d_{A,1},\dots,d_{A,N_s})$ is majorized by the vector of eigenvalues
$(\lambda_{A,1},\ldots,\lambda_{A,N_s})$ of $\mathbf{R}_{A}$, with equality
attained by choosing $\mathbf{V}_s$ to align with the eigenvectors of
$\mathbf{R}_{A}$. Since $f(\mathbf{X})$ is a linear functional of the $d_{A,i}$
with nonnegative coefficients, majorization theory
\cite[Cor.~2.1]{palomar2007majorization} implies that it is maximized by this
aligned choice, for which $d_{A,i}=\lambda_{A,i}$, $i=1,\ldots,\tilde N_s$.
Equation \eqref{eq:app_f_scalar} then becomes
\begin{equation}
    f(\mathbf{X})
    =
    \sum_{n=1}^{N_t}\sum_{i=1}^{\tilde N_s}
    d_{B,n}\,\lambda_{A,i}\,
    \frac{\lambda'_n\,\sigma_i^2}{\lambda'_n\,\sigma_i^2+\sigma_{0,s}^2}.
    \label{eq:app_f_scalar_lambda}
\end{equation}
Combining the above, an optimal sensing combiner is
$\mathbf{A}_s^{\circ}=\mathbf{U}_s\boldsymbol{\Sigma}_s\mathbf{V}_s^H\mathbf{R}_A^{-1/2}$.
 
\subsection{Expectation over $\bsf{X}$}
\label{subsec:app_expectation}
 
In \eqref{eq:app_f_scalar_lambda}, the only quantities that depend on the random
waveform $\bsf{X}$ are the eigenvalues $\{\lambda'_n\}$ of
$\bsf{W}=\mathbf{R}_B^{1/2}\bsf{X}^{*}\bsf{X}^{T}(\mathbf{R}_B^{1/2})^H$ and the
diagonal entries $\{d_{B,n}\}$ of $\mathbf{U}'^{H}\mathbf{R}_{B}\mathbf{U}'$, where
$\mathbf{U}'$ is the eigenvector matrix of $\bsf{W}$. Since $\bsf{W}$ is a
Wishart-type matrix, its eigenvectors are isotropically distributed and
statistically independent of its eigenvalues $\{\lambda'_n\}$
\cite[Thm.~2.2]{couillet2011random}. As $d_{B,n}$ is a function of $\mathbf{U}'$
only, this independence lets the expectation of each summand in
\eqref{eq:app_f_scalar_lambda} factor as
\begin{equation}
    \mathbb{E}_{\bsf{X}}\!\left[
        d_{B,n}\,\frac{\lambda'_n\,\sigma_i^2}{\lambda'_n\,\sigma_i^2+\sigma_{0,s}^2}
    \right]
    =
    \mathbb{E}_{\bsf{X}}[d_{B,n}]\;
    \mathbb{E}_{\bsf{X}}\!\left[
        \frac{\lambda'_n\,\sigma_i^2}{\lambda'_n\,\sigma_i^2+\sigma_{0,s}^2}
    \right].
    \label{eq:app_factor}
\end{equation}
By the isotropy of the eigenvectors, each column $\mathbf{u}'_n$ of $\mathbf{U}'$
is uniformly distributed on the unit sphere, hence
$\mathbb{E}_{\bsf{X}}[\mathbf{u}'_n\mathbf{u}_n'^{H}]=\mathbf{I}_{N_t}/N_t$, and by
the cyclicity of the trace,
\begin{IEEEeqnarray}{rCl}
    \mathbb{E}_{\bsf{X}}[d_{B,n}]
    &=&
    \mathbb{E}_{\bsf{X}}\!\bigl[(\mathbf{U}'^{H}\mathbf{R}_{B}\mathbf{U}')_{n,n}\bigr]
    =
    \mathbb{E}_{\bsf{X}}\!\bigl[\mathbf{u}_n'^{H}\mathbf{R}_{B}\mathbf{u}_n'\bigr]
    \nonumber\\
    &=&
    \operatorname{Tr}\!\bigl(\mathbf{R}_{B}\,\mathbb{E}_{\bsf{X}}[\mathbf{u}'_n\mathbf{u}_n'^{H}]\bigr)
    =
    \frac{\operatorname{Tr}(\mathbf{R}_{B})}{N_t}.
    \IEEEeqnarraynumspace
    \label{eq:app_EdB}
\end{IEEEeqnarray}
Substituting \eqref{eq:app_factor} and \eqref{eq:app_EdB} into the expectation of
\eqref{eq:app_f_scalar_lambda} gives
\begin{equation}
    \mathbb{E}_{\bsf{X}}[f(\bsf{X})]
    =
    \sum_{n=1}^{N_t}\sum_{i=1}^{\tilde N_s}
    \frac{\operatorname{Tr}(\mathbf{R}_{B})}{N_t}\,\lambda_{A,i}\,
    \mathbb{E}_{\bsf{X}}\!\left[
    \frac{\lambda'_n\,\sigma_i^2}{\lambda'_n\,\sigma_i^2+\sigma_{0,s}^2}
    \right].
    \label{eq:app_Ef}
\end{equation}
 
Finally, the conditional distortion \eqref{eq:app_pcs_conditional_lmmse} equals
$\frac{1}{N_tN_s}\operatorname{Tr}(\mathbf{R}_B\otimes\mathbf{R}_A)
-\frac{1}{N_tN_s}f(\mathbf{X})$, so minimizing the expected distortion over the
sensing combiner amounts to maximizing $\mathbb{E}_{\bsf{X}}[f(\bsf{X})]$ over the
squared singular values $\{\sigma_i^2\}$ subject to
\eqref{eq:app_sigma_constraint}. Substituting \eqref{eq:app_Ef} and
\eqref{eq:app_sigma0_s_final} into this maximization yields
\eqref{eq:epsilon_G_general}, which completes the proof.

\bibliographystyle{IEEEtran}
\bibliography{references_abbrev_compact}
\vspace{12pt}

\end{document}